	\theoremstyle{plain}
	\newtheorem{thm}{Theorem}
	\theoremstyle{definition}
	\newtheorem{exmp}{Example}
	\theoremstyle{remark}
	\newtheorem*{rem}{Remark}
\begin{document}

\title{Quantum Speedup in Adaptive Boosting of Binary Classification}

\author{Ximing Wang}
\affiliation{Shenzhen Institute for Quantum Science and Engineering and 
Department of Physics, University of Science and Technology, Shenzhen 518055,
China}
\affiliation{University of Technology Sydney, Australia}
\author{Yuechi Ma}
\affiliation{Shenzhen Institute for Quantum Science and Engineering and 
Department of Physics, University of Science and Technology, Shenzhen 518055,
China}
\affiliation{Center for Quantum Information, Institute for Interdisciplinary
Information Sciences, Tsinghua University, Beijing 100084, China}
\author{Min-Hsiu Hsieh}
\email{min-hsiu.hsieh@uts.edu.au}
\affiliation{University of Technology Sydney, Australia}
\author{Manhong Yung}
\email{yung@sustc.edu.cn}
\affiliation{Shenzhen Institute for Quantum Science and Engineering and 
Department of Physics, University of Science and Technology, Shenzhen 518055,
China}
\affiliation{Shenzhen Key Laboratory of Quantum Science and Engineering,
University of Science and Technology, Shenzhen 518055, China}
\affiliation{Central Research Institute, Huawei Technologies, Shenzhen 51829, China}

\begin{abstract}
In classical machine learning, a set of weak classifiers can be adaptively combined to form a strong classifier for improving the overall performance, a technique called adaptive boosting (or AdaBoost). However, constructing the strong classifier for a large data set is typically resource consuming. Here we propose a quantum extension of AdaBoost, demonstrating a quantum algorithm that can output the optimal strong classifier with a quadratic speedup in the number of queries of the weak classifiers. Our results also include a generalization of the standard AdaBoost to the cases where the output of each classifier may be probabilistic even for the same input. We prove that the update rules and the query complexity of the non-deterministic classifiers are the same as those of deterministic classifiers, which may be of independent interest to the classical machine-learning community. Furthermore, the AdaBoost algorithm can also be applied to data encoded in the form of quantum states; we show how the training set can be simplified by using the tools of t-design. Our approach describes a model of quantum machine learning where quantum speedup is achieved in finding the optimal classifier, which can then be applied for classical machine-learning applications.
\end{abstract}

	\maketitle

\textit{Introduction---}
One of the most fundamental topics in machine learning is how to train a machine to perform classification tasks given a set of labelled samples.  Unlike human brains that can efficiently distinguish different objects with eyes, the classification capability of a learning machine is restricted by the amount of available information. In many practical circumstances, the performances of the classifiers could be quite weak, say, they are only slightly better than tossing a fair coin. A big question that attracted significant interests in the past is whether these weak classifiers can be efficiently combined to a strong classifier. Yoav Freund and Robert Schapire~\cite{Freund1996} provided a positive solution to this question in their G\"{o}del Price winning work. Their pioneering work triggered a long line of follow-up work, and led to a well-known adaptive boosting algorithm that yields a strong classifier from an ensemble of weak classifiers, abbreviated as AdaBoost~\cite{Mohri2012}.
The original AdaBoost works well for binary classification problems, such as detecting human in images and videos~\cite{kong2015, Markoski2015}. It can also be extended to multi-class classification tasks, e.g. face recognition~\cite{Owusu2014, Jiang2015}. 

Adaboost also works well with a large amount of widely used machine learning algorithms, such as support vector machine (SVM)~\cite{Li2008},	decision tree~\cite{Roe2005}, rotation forest~\cite{Zhang2008} and so on.	With the help of these algorithms, it has been applied to a variety of scenarios since its appearance, including chemical and biological informatics, financial analysis, information extraction, computer vision and natural language processing, which are all important techniques that are spreading influences into modern industry. However, as most machine learning algorithms, AdaBoost requires a huge amount of data to train, and the complexity of the algorithm grows rapidly with the increase of sample size. This rapidly growing data unambiguously becomes a fundamental challenge for further development. Therefore, any improvement of	the complexity of machine learning algorithms would be extremely valuable.

As a novel field lying at the intersection of quantum physics and computer science, quantum computing shows a potential of significantly improving classical algorithms. Quantum computation, taking advantages of the laws of quantum physics, has the ability to process data with much fewer registers than classical computing because different states of registers could exist at the same time as a superposition. Efficient quantum algorithms have been discovered since 1990s, such as Grover searching	algorithm~\cite{Grover1996}, quantum phase estimation	algorithm~\cite{Kitaev1995}, factoring integers, discrete	logarithm~\cite{Shor1997}, and solving linear system	~\cite{Harrow2009,Childs2017}. Vast new algorithms are constructed based on	them~\cite{Coles2018}, and many of them are proved to be faster than any possible classical, and even probabilistic algorithms~\cite{Buhrman2002}.

Considering the advantage of quantum computation, it is natural to	investigate this new idea into the field of machine learning, seeking	solutions to the ``big data" challenge. Quantum machine learning, as an interdisciplinary field between machine learning and quantum computation, explores how to deal with big data with quantum computers. It has made great progress in recent years~\cite{Biamonte2017, Ciliberto2018}. It has become a matter of interest for the great potential of solving the challenge of ``big	data''~\cite{Cai2015,Li2015,Dunjko2016,Schuld2017}. Major breakthroughs	include quantum support vector	machine~\cite{Rebentrost2014,Li2015}, quantum discriminant	analysis~\cite{Cong2016} and quantum principal component	analysis~\cite{Lloyd2014}.

In this Letter, we revisit the original AdaBoost algorithm, and propose a quantum generalization with a quadratic speedup. Besides a quantum AdaBoost algorithm, we also show that the original AdaBoost algorithm has the similar performance even if the basis classifiers are relaxed to probabilistic ones. Our quantum AdaBoost has the potential to improve various existing quantum machine learning algorithms, such as quantum state discriminating	algorithms~\cite{Sasaki2001,Bae2015}.


\textit{Binary Classification---} Classification is the task of assigning same labels to a collection of inputs with the same structures. 
It is a crucial component in constructing supervised learning models. 
Specifically, in binary supervised learning, we are given a sample consisting of $N$ labeled examples (data) $\{(x_1,y_1), \cdots, (x_j,y_j), \cdots (x_{N},y_{N})\}$, where each $x_j$ is a data point in a sample space $\Omega$, and $y_j\in \{1,-1\}$ is the label of $x_j$. 
A binary classifier can be considered as a mapping $H:\Omega \to \{-1, +1\}$. 

However, in many cases, such an optimal classifier is hard to obtain, and only {\it weak} classifiers (or basis classifiers), whose performance may be slightly better then random guessing, are available. In \cite{Kearns1994}, Kearns and Valiant proposed the following  fundamental question: how to {\it boost} a set of weak classifiers	$\{H_1,H_2,\cdots,H_T\}$ to become a strong classifier? It turns out that such a task can be achieved with the boosting technique.

\textit{Conventional AdaBoost---} The main idea of boosting is to construct
a strong classifier $H_{\textbf{strong}}$ given a collection of weak classifiers $\{H_{1},\cdots, H_T\}$:
	\begin{equation}\label{eq:AdaBoostModel}
		H_{\textbf{strong}}(x)
		= \textbf{sgn} (g_T(x)) \ ,
	\end{equation}
where $g_T(x):=\sum_{t=1}^{T}\alpha_t H_t(x)$ is a weighted sum of the $T$
	basis classifiers $H_t(x)$, and the sign function $\textbf{sgn}(v)$ gives
	$+1$ if $v$ is positive and $-1$ otherwise. 
	
	Assume that each input $x\sim D$ occurs with a probability
	denoted by $p(x)$. The corresponding cost function $C_T$ in boosting is the
	exponential error of $g_T(x)$,
	$ C_T = \sum_x \prod_{t = 1}^T {p\qty(x) e^{-\alpha _t H_t(x)y(x)} }$,
	where the $T$ coefficients $\{\alpha_i\}_{i\in [T]}$ in $g_T(x)$ have to be
	optimized. In the binary case, $H_t(x), y(x) \in \qty{-1, +1}$, we have
	$H_t(x)y(x) = (-1)^{r^x_t}$, where the indicator function $r^x_t$ is $1$ if
	the $t^\text{th}$ classifier $H_t$ classifies $x$ wrong, and $0$ otherwise.
	With the indicator $r^x_t$, $C_T$ can be rewritten as
	\begin{equation}
		\label{eq:ConventionalCost}
		C_T=\sum_x {\prod_{t = 1}^T {
			p(x){e^{-\alpha _t (-1)^{r^x_t}}}
		}} \ .
	\end{equation}

	Note that, an optimal solution $\{\alpha_i\}_{i\in[T]}$ exists because this
	is a convex optimization problem \cite{Mohri2012}. Moreover, this optimal
	solution would not be worse than any of the basis classifiers, as the
	choice of $\alpha_i =1$ and $\alpha_{t\neq i}=0$ reduces to a basis
	classifier, and this would not be better than the optimal solution. As a result, the strong
	classifier $H_{\textbf{strong}}$ should be able to classify (i.e., $g_T(x)
	y(x) > 0$) the inputs with a success rate better than any of the basis
	classifiers.

	The key feature of AdaBoost based on $C_T$ is that one can adaptively
	compute $\alpha_t$ using the recursive relation:
	$C_t=C_{t-1}e^{\alpha_t H_t(x) y(x)}$. We can determine the desired
	coefficient $\alpha_t$ that minimizes $C_t$ by differentiating
	$\frac{\partial C_t}{\partial \alpha_t}$ and bring it to zero.

	As a result, the optimal solution to the conventional AdaBoost is given by
	$\alpha_t=\frac{1}{2} \ln(\frac{1-R_t}{R_t})$ \cite{Mohri2012},
	where the weighted errors
	$R_t:=\underset{x\sim D}{\mathbb{E}}[W^x_t r^x_t]$.
	The weights $W^x_t$ can be obtained adaptively with the information of
	$R_1,\cdots,R_{t-1}$ (details can be found in
	section~\hyperref[sec:TO]{Proof of Thorem~\ref{thm:ProbOS}} in
	appendix or \cite{Mohri2012}). In next section, we shall show that the
	conventional AdaBoost is simply a special case of probabilistic AdaBoost.

	\textit{Probabilistic AdaBoost---} Here we present a generalized AdaBoost
	method which can be applied to the cases where each basis classifiers are
	probabilistic. In conventional AdaBoost, each basis classifier $H_t$ must
	produce the same label for each input $x$, even if it is incorrect. We
	consider the setting where each basis classifier produces a label
	\textit{probabilistically} for each input $x$. Specifically, we define
	$q_t(r^x_t| x)$ to be the conditional probability where $H_t$ produces the
	label such that $r^x_t \in \{0,1\}$.

	For further convenience here we define a binary string of length $t$ as
	$\pmb{s}_{t} := s_1 s_2\cdots s_t$ to record the results for a sequence of
	basis classifiers for input $x$. The joint probability to obtain the string
	$\pmb{s}_{t}$ with input $x$ is given by
	$q(\pmb{s}_t,x):= p(x)\prod_{i=1}^t q_i(r^x_i = s_i|x)$.
	With a little abuse of notation, the cost function (exponential error
	\eqref {eq:ConventionalCost}) is redefined for probabilistic case:
	\begin{equation}
		\label{eq:ProbCost}
		C_T = \sum_{x,\pmb{s}_T}
		\prod_{t=1}^T q(\pmb{s}_T,x)\prod_{s_i}e^{-\alpha_t \qty(-1)^{s_t}} \ .
	\end{equation}

	We shall show that (i) the solution of the probabilistic models of AdaBoost
	is a generalization of the conventional case, but (ii) their query
	complexities are the same, which are summarized by the following theorems:
	\begin{thm}[Optimal solution to probabilistic case]
		\label{thm:ProbOS}
		The optimal solution to the probabilistic AdaBoost model, in terms of
		$\{\alpha_t\}$ minimizing the cost function $C_T$ \eqref{eq:ProbCost},
		is given by
		\begin{equation}
			\alpha_t
			=
			\frac{1}{2}
			\ln
			\frac
			{1 - \tilde{R}_t}
			{\tilde{R}_t} \ ,
		\end{equation}
	where $\tilde{R}_t$ is the weighted error taking over the joint distribution
	$q(\pmb{s}_t,x)$:
	\begin{equation}
		\label{eq:PError}
		\tilde{R}_t = {\mathbb{E}}\qty[W^x_{\pmb{s}_t}r^x_t]
		= \sum_{x,\pmb{s}_t} q(\pmb{s}_t,x) W^x_{\pmb{s}_t}r^x_t \ .
	\end{equation}
	\end{thm}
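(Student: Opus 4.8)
The plan is to follow the logic of the conventional derivation, but to carry the bookkeeping over the full joint distribution $q(\pmb{s}_t,x)$ rather than over individual data points. At stage $t$ we minimize the cost $C_t$ over $\alpha_t$ with $\alpha_1,\dots,\alpha_{t-1}$ already fixed (equivalently, $C_T$ with the not-yet-chosen coefficients set to zero), exactly as in the conventional adaptive recursion relating $C_t$ to $C_{t-1}$. First I would peel off the contribution of the $t$-th classifier by factoring $q(\pmb{s}_t,x)=q(\pmb{s}_{t-1},x)\,q_t(s_t\mid x)$ and isolating the last exponential $e^{-\alpha_t(-1)^{s_t}}$. Splitting the innermost sum over $s_t\in\{0,1\}$ produces a ``correct'' branch ($s_t=0$, factor $e^{-\alpha_t}$) and a ``wrong'' branch ($s_t=1$, factor $e^{+\alpha_t}$), while the accumulated product $\prod_{i=1}^{t-1}e^{-\alpha_i(-1)^{s_i}}$ over the earlier classifiers is precisely what defines the (unnormalized) weight $W^x_{\pmb{s}_{t-1}}$. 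After this step $C_t$ reduces to a two-term expression $B\,e^{-\alpha_t}+A\,e^{+\alpha_t}$, where $A$ collects the weighted mass of the misclassifications and $B$ that of the correct classifications.

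Next I would set $\partial C_t/\partial\alpha_t=0$, which yields $A\,e^{+\alpha_t}=B\,e^{-\alpha_t}$ and hence $\alpha_t=\tfrac12\ln(B/A)$. Since $A+B$ equals the total accumulated weight $C_{t-1}$, the normalized weighted error is $\tilde R_t=A/(A+B)$, so that $B/A=(1-\tilde R_t)/\tilde R_t$ and the stated formula follows at once. To close the loop I would verify that this $\tilde R_t$ coincides with the expectation in \eqref{eq:PError}: because $r^x_t=s_t$ selects exactly the $s_t=1$ terms and the normalized weight $W^x_{\pmb{s}_t}$ reduces, on that branch, to $W^x_{\pmb{s}_{t-1}}/(A+B)$, the sum $\sum_{x,\pmb{s}_t}q(\pmb{s}_t,x)W^x_{\pmb{s}_t}r^x_t$ collapses to $A/(A+B)$, as required. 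A short convexity check---$\partial^2 C_t/\partial\alpha_t^2=B\,e^{-\alpha_t}+A\,e^{+\alpha_t}>0$ being a sum of manifestly positive terms---confirms that the critical point is the unique global minimizer rather than a saddle.

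The main obstacle is not any analytic difficulty but the consistent \emph{bookkeeping of the weights}: one must define $W^x_{\pmb{s}_t}$ with the correct normalization, establish how it updates from stage $t-1$ to stage $t$, and check that marginalizing over the histories $\pmb{s}_{t-1}$ reproduces the same structure as the single-sample conventional case. Getting this right is what makes the probabilistic derivation formally identical to the deterministic one, so that the conventional result $\alpha_t=\tfrac12\ln\frac{1-R_t}{R_t}$ is recovered as the special case in which each $q_t(r^x_t\mid x)\in\{0,1\}$.
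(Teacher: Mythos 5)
Your proposal follows essentially the same route as the paper's own proof: factor $C_t$ through the joint distribution $q(\pmb{s}_{t-1},x)$ and the accumulated exponential weights $w^x_{\pmb{s}_{t-1}}=\prod_{i<t}e^{-\alpha_i(-1)^{s_i}}$, reduce the cost to $B\,e^{-\alpha_t}+A\,e^{+\alpha_t}$, solve the stationarity condition, and identify $\tilde R_t=A/(A+B)$ with the normalization $Z_t=A+B$ absorbed into the weights, exactly as in the appendix. The one notational slip is that the normalized weight entering \eqref{eq:PError} is $w^x_{\pmb{s}_{t-1}}/(A+B)$ with the \emph{unnormalized} accumulated product in the numerator (the paper's $W^x_{\pmb{s}_t}:=w^x_{\pmb{s}_{t-1}}/Z_t$), not $W^x_{\pmb{s}_{t-1}}/(A+B)$; with that correction your argument is the paper's argument.
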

	Here the weights $W^x_{\pmb{s}_t}$ are obtained adaptively: starting from
	empty string $\pmb{s}_0$, $W^x_{\pmb{s}_0}\equiv 1$, for $k\in[0,t-1]$
	\begin{equation}
	\label{eq:Updating}
		\begin{aligned}
			W^x_{\pmb{s}_{k+1}}              & =
			\frac{1}{2 \tilde{R}_k}W^x_{\pmb{s}_k}   & \text{if }\ s_{k+1}=1 \\
			W^x_{\pmb{s}_{k+1}}             & =
			\frac{1}{2(1-\tilde{R}_k)}W^x_{\pmb{s}_k}& \text{if }\ s_{k+1}=0 \ .
		\end{aligned}
	\end{equation}

	The proof of theorem~\ref{thm:ProbOS} is given in the
	appendix. The result of conventional AdaBoost is just a special case of
	theorem~\ref{thm:ProbOS} while $q_i(r^x_i|x)$ only take values in
	$\qty{0,1}$.

	Defining $\hat{c}=\max_{x,\pmb{s}_t}\qty{W^x_{\pmb{s}_t}}$ as the maximum
	value of $W^x_{\pmb{s}_t}$ for all $x$, the complexity to find the optimal
	solution is as follows.

	\begin{thm}[Query Complexity for probabilistic case]
		\label{thm:CompProb}
		There exists an algorithm that can approximate the optimal coefficients
		in theorem~\ref{thm:ProbOS} with precision $\epsilon$ with
		$\mathcal{O}\qty(\frac{\hat{c}^2}{\epsilon^2}T)$ queries of basis
		classifiers $H$, and requires
		$N=\mathcal{O}(\frac{\hat{c}^2} {\epsilon^2})$ data points as training
		sample.
	\end{thm}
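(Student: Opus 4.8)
The plan is to reduce the task of approximating each coefficient $\alpha_t$ to that of estimating the weighted error $\tilde{R}_t$, and then to bound the number of samples required for that estimation by a concentration argument. By Theorem~\ref{thm:ProbOS} we have $\alpha_t=\frac{1}{2}\ln\frac{1-\tilde{R}_t}{\tilde{R}_t}$, and the logit map $R\mapsto\frac{1}{2}\ln\frac{1-R}{R}$ has derivative $-\frac{1}{2R(1-R)}$, so it is Lipschitz on any interval bounded away from $0$ and $1$. Hence it suffices to approximate each $\tilde{R}_t$ to additive accuracy $\mathcal{O}(\epsilon)$: an error $\delta$ in $\tilde{R}_t$ propagates to an error $\mathcal{O}(\delta)$ in $\alpha_t$ after rescaling $\epsilon$ by the (constant) Lipschitz factor.

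First I would construct an empirical estimator of $\tilde{R}_t$. Drawing $x\sim p(x)$ and then running the basis classifiers $H_1,\dots,H_t$ on $x$ yields a realized string $\pmb{s}_t$ distributed exactly according to $q(\pmb{s}_t,x)$; for each such sample one evaluates the single-shot contribution $X:=W^x_{\pmb{s}_t}\,r^x_t$, whose expectation equals $\tilde{R}_t$ by \eqref{eq:PError}. Averaging over $N$ independent draws gives the unbiased estimator $\hat{R}_t=\frac{1}{N}\sum_j X_j$. Because $r^x_t\in\{0,1\}$ and $W^x_{\pmb{s}_t}\le\hat{c}$ by the definition of $\hat{c}$, each $X_j$ is confined to the bounded interval $[0,\hat{c}]$, which is precisely the structure needed for a sharp concentration bound.

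The next step is to invoke Hoeffding's inequality for a sum of $N$ independent variables in $[0,\hat{c}]$, giving $\Pr[\,|\hat{R}_t-\tilde{R}_t|\ge\epsilon\,]\le 2\exp(-2N\epsilon^2/\hat{c}^2)$. Forcing this failure probability to be a small constant yields $N=\mathcal{O}(\hat{c}^2/\epsilon^2)$, reproducing the claimed sample size; a union bound over the $T$ rounds controls the joint event (suppressing the standard logarithmic factor). Since the weights $W^x_{\pmb{s}_t}$ are updated incrementally via \eqref{eq:Updating} and each round $t$ appends exactly one new classifier evaluation per sample, the same $N$ data points are reused across all $T$ rounds, so the total number of basis-classifier queries is $N\cdot T=\mathcal{O}(\hat{c}^2 T/\epsilon^2)$, as stated.

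The main obstacle will be the accumulation of error through the adaptive weight updates rather than the concentration bound itself. The weights entering round $t$ depend on the previously estimated errors $\hat{R}_1,\dots,\hat{R}_{t-1}$ through the products of factors $\frac{1}{2\hat{R}_k}$ and $\frac{1}{2(1-\hat{R}_k)}$ in \eqref{eq:Updating}, so an estimation error at an early round propagates multiplicatively to later rounds. The careful part of the argument is to show that if every $\hat{R}_k$ stays within $\epsilon$ of $\tilde{R}_k$, the induced perturbation of the weights, and hence the bias in $\hat{R}_t$, remains bounded; this is exactly where $\hat{c}$ enters, since it already caps the worst-case amplification produced by these ratios. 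Controlling this propagation together with the union bound over rounds completes the estimate and delivers the claimed query and sample complexities.
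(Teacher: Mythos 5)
Your proposal follows essentially the same route as the paper: estimate each $\tilde{R}_t$ by the empirical mean of the bounded variable $W^x_{\pmb{s}_t}r^x_t\in[0,\hat{c}]$, invoke Hoeffding's inequality to obtain $N=\mathcal{O}(\hat{c}^2/\epsilon^2)$, and multiply by the $T$ rounds to get $\mathcal{O}(\hat{c}^2T/\epsilon^2)$ queries. The multiplicative error propagation through the adaptive weight updates that you flag in your final paragraph is a genuine subtlety, but you leave it unresolved and the paper's own proof does not address it at all (it simply bounds the weights by $\hat{c}$ and treats each branch as deterministic), so your attempt is at least as complete as the original.
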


	\begin{proof}
		(Details see appendix.) The explicit algorithm is shown in
		algorithm~\ref{alg:Classical}. This is indeed a Monte Carlo method.
		Note that, although there are $2^T$ possible routines of $\pmb{s}_T$
		for each $x$, according to a variation of Hoeffding's inequaltiy
		(theorem~\ref{thm:MDinequal}), the precision of sampling
		$\tilde{R}_t = \mathbb{E}[W^x_{\pmb{s}_t}r^x_t]$ only depends on the
		maximum and minimum value of the function to be averaged, but not the
		distribution. This can be seen as each input $x$ is only evaluated once
		with each classifier $H_t$, for each branch $\pmb{s}_T$, the
		evaluations behaves as if the classifiers are deterministic.
		Furthermore, each routine $\pmb{s}_T$ obeys the same update rule;
		therefore the maximum value is bounded in the same way, i.e., only a
		sample $S$ of size $N = \mathcal{O}\qty(\frac{\hat{c}^2}{\epsilon^2})$
		examples are required to estimate $\tilde{R}_t$ with $\hat{R}_t:=
		\frac{1}{N}\sum_{x\in S}W^x_{\pmb{s}_t} r^x_t$ with precision
		$\epsilon$. To complete the computation, this need to be repeated for
		each $t$, and gives the total query complexity
		$\mathcal{O}\qty(NT)=\mathcal{O}\qty(\frac{\hat{c}^2}{\epsilon^2}T)$.
	\end{proof}

	\begin{algorithm}[H]
		\caption{Classical Adaboost}\label{alg:Classical}
	\begin{algorithmic}[1]
		\State Import $H_t$; \Comment The $T$ basis classifiers
		\State Input $S$; \Comment The sample of size $N$

			\State Initialize $W^x_0 \equiv 1$;
			\For{$t$ from $1$ to $T$} \label{line:QueryStart}
				\For{$x$ in $S$}
					\State $r^x_t \leftarrow H_t(x)$;
				\EndFor
			\EndFor \label{line:QueryEnd}
			\State \Comment Iterate over all classifiers, only arithmetic part
			below.
			\For{$t$ from $1$ to $T$}
				\State $\hat{R}_t\leftarrow
				\frac{1}{\abs{S}}\sum_{x\in S} r^x_t W^x_{\pmb{s}_t}$
				\Comment Take the average over $x$ \label{line:Average}
				\For{$x$ in $S$} \label{line:start}
					\If{$r^x_t = 0$}
						\State $W^x_{\pmb{s}_t} \leftarrow{W^x_{\pmb{s}_{t-1}}}
						/{2(1-\hat{R}_t)}$;
					\Else \
						\State $W^x_{\pmb{s}_t} \leftarrow{W^x_{\pmb{s}_{t-1}}}
						/{2\hat{R}_t}$;
					 \EndIf
				\EndFor \label{line:end}
				\State $\alpha_t \leftarrow \frac{1}{2}
				\ln\left(\frac{1-\hat{R}_t}{\hat{R}_t}\right)$;
			\EndFor
			\State Output $\qty{\alpha_1,\cdots,\alpha_{T}}$;
	\end{algorithmic}
	\end{algorithm}


	\textit{Quantum AdaBoost---} Here we propose a quantum version of the
	AdaBoost algorithm, which provides a quadratic speed up of the query
	complexity in term of the sample size $N$.

	A key observation about the classical AdaBoost algorithm~\ref{alg:Classical}
	is that, the weights $W^x_{\pmb{s}_t}$ are updated independently for each
	$x$; also we are only interested in the average
	$\tilde{R}_t = \mathbb{E}[W^x_{\pmb{s}_t} r^x_t]$ \eqref{eq:PError}.

	To translate algorithm~\ref{alg:Classical} into a quantum algorithm, first
	we define $\ket{\pmb{s}_t}_{\mathcal{R}_t}\equiv
	\ket{s_1}\otimes\cdots\otimes\ket{s_t}$ to be a register encoding the
	string $\pmb{s}_t$. Provided that one can access each basis classifier as a
	query operator $\hat{\mathcal{H}}_i$ in quantum superposition, i.e.,
	\begin{equation}\label{eq:QQuery}
	\hat{\mathcal{H}}_i\ket{x}_X\ket{0} \to
			\ket{x}_X
			\left(
				\sqrt{q_i(0|x)}\ket{0}
				+
				\sqrt{q_i(1|x)}\ket{1}
			\right) \ ,
	\end{equation}
	algorithm~\ref{alg:Classical} can be sped up by applying phase
	estimation. The $q_i(r^x_i|x)$ here works as the same role as the
	probabilistic case. Also, a register $\ket{W^x_{\pmb{s}_t}}_M$ encoding the 
	numerical value of weights $W^x_{\pmb{s}_t}$ need to be introduced.

	The algorithm~\ref{alg:Classical} can be divided into three parts:

	In lines~\ref{line:QueryStart}-\ref{line:QueryEnd}, it queries each
	classifier with each example in sample $S$. This can be done with $t$
	queries $\bigotimes_{i=1}^t \hat{\mathcal{H}}_i$ on each qubit in
	$\mathcal{R}_t$, one can obtain a state of the superposition of all 
	branches $\pmb{s}_t$:
	\begin{equation}
		\label{eq:allQuery}
		\bigotimes_{i=1}^t \hat{\mathcal{H}}_i\ket{x}_X\ket{0}_{\mathcal{R}_t}
		\to
		\sum_{\pmb{s}_t}\sqrt{q(\pmb{s}_t,x)}
		\ket{x}_X\ket{\pmb{s}_t}_{\mathcal{R}_t}.
	\end{equation}

	Then in line~\ref{line:Average}, the algorithm evaluates $\hat{R}_t$, which
	estimate $\tilde{R}_t$ well, by taking the average of
	$W^x_{\pmb{s}_t}r^x_t$. In quantum algorithm is can be done with phase
	estimation.
	
	Finally, in lines~\ref{line:start}-\ref{line:end}, the algorithm
	updates the wights $W^x_{\pmb{s}_t}$ with the information of $\hat{R}_t$,
	which can be done in quantum computer with same gate complexity as classical
	as it is an arithmetic process.

	Our main result for the quantum algorithm is as follows:
	\begin{thm}[Query Complexity for quantum case]\label{thm:CompQuantum}
		There exists a quantum algorithm that can approximate the optimal
		coefficients in theorem~\ref{thm:ProbOS} with precision $\epsilon$
		with $\mathcal{O}\qty(\frac{\hat{c}}{\epsilon}T^2)$ queries of
		quantum basis classifiers $\hat{\mathcal{H}}$, and requires
		$N = \mathcal{O}\qty(\frac{\hat{c}^2}{\epsilon^2})$ data points as
		training sample.
	\end{thm}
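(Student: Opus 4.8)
The plan is to replace the classical Monte Carlo evaluation of $\hat{R}_t$ (lines~\ref{line:QueryStart}--\ref{line:Average} of algorithm~\ref{alg:Classical}) with quantum amplitude estimation, inheriting its quadratic advantage in the $\hat{c}/\epsilon$ scaling, while showing that the adaptive structure forces a per-round query cost growing linearly in $t$ and hence produces the $T^2$ factor. The sample-size bound $N=\mathcal{O}(\hat{c}^2/\epsilon^2)$ is then inherited verbatim from theorem~\ref{thm:CompProb}, since it is a purely statistical requirement on how many data are needed to define an accurate $\hat{R}_t$, not on how fast $\hat{R}_t$ is computed.

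First I would fix the state-preparation unitary $\mathcal{A}_t$ for round $t$. Starting from the uniform superposition $\frac{1}{\sqrt{N}}\sum_{x\in S}\ket{x}_X\ket{0}_{\mathcal{R}_t}$ over the fixed sample $S$, I apply $\bigotimes_{i=1}^t\hat{\mathcal{H}}_i$ as in \eqref{eq:allQuery} to reach $\frac{1}{\sqrt{N}}\sum_{x,\pmb{s}_t}\sqrt{q(\pmb{s}_t|x)}\,\ket{x}_X\ket{\pmb{s}_t}_{\mathcal{R}_t}$. Because the update rule \eqref{eq:Updating} is purely arithmetic and depends only on the recorded bits $s_1,\dots,s_t$ together with the already-computed constants $\hat{R}_1,\dots,\hat{R}_{t-1}$, I can write $W^x_{\pmb{s}_t}$ coherently into the register $M$ by a reversible circuit, exactly the quantum analogue of lines~\ref{line:start}--\ref{line:end}. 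Since $r^x_t=s_t$, the product $W^x_{\pmb{s}_t}r^x_t$ is simply $W^x_{\pmb{s}_t}$ on the $s_t=1$ branch. A final controlled rotation writes $\sqrt{W^x_{\pmb{s}_t}r^x_t/\hat{c}}$ onto an ancilla (legitimate since $W^x_{\pmb{s}_t}r^x_t\le\hat{c}$ by definition of $\hat{c}$), so that the probability of finding the ancilla in $\ket{1}$ equals $\hat{R}_t/\hat{c}$.

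Next I would run amplitude estimation on $\mathcal{A}_t$ to read out this probability. Recovering $\hat{R}_t$ to additive precision $\epsilon$ amounts to estimating the amplitude to precision $\epsilon/\hat{c}$, which costs $\mathcal{O}(\hat{c}/\epsilon)$ applications of $\mathcal{A}_t$ and $\mathcal{A}_t^\dagger$ — the quadratic improvement over the $\mathcal{O}(\hat{c}^2/\epsilon^2)$ classical samples used in theorem~\ref{thm:CompProb}. Each application of $\mathcal{A}_t$ spends $t$ classifier queries, so round $t$ costs $\mathcal{O}(t\,\hat{c}/\epsilon)$, and summing over the $T$ adaptive rounds gives $\sum_{t=1}^T\mathcal{O}(t\,\hat{c}/\epsilon)=\mathcal{O}(\hat{c}\,T^2/\epsilon)$ queries, the claimed bound. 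The statistical accuracy of $\hat{R}_t$ as an estimate of $\tilde{R}_t$ is guaranteed by the same Hoeffding-type estimate (theorem~\ref{thm:MDinequal}) invoked classically, fixing $N=\mathcal{O}(\hat{c}^2/\epsilon^2)$.

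The main obstacle I anticipate is controlling error propagation through the adaptive chain: $\mathcal{A}_t$ bakes in the previously estimated $\hat{R}_1,\dots,\hat{R}_{t-1}$ through the weights, so the $\mathcal{O}(\epsilon)$ error of each early estimate feeds forward into every later round. I would argue, as in the classical convex-optimization analysis underlying theorem~\ref{thm:ProbOS}, that it suffices to pin down each $\alpha_t=\frac{1}{2}\ln\frac{1-\hat{R}_t}{\hat{R}_t}$ to precision $\epsilon$, and that the logarithm is well-conditioned away from $\hat{R}_t\in\{0,1\}$, so a constant reallocation of the per-round target precision keeps the accumulated error $\mathcal{O}(\epsilon)$ without changing the asymptotic query count. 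A secondary technical point is boosting the amplitude-estimation success probability by median amplification over $\mathcal{O}(\log(T/\delta))$ repetitions, so that all $T$ estimates are simultaneously accurate with high probability; this contributes only a logarithmic factor absorbed into the $\mathcal{O}$.
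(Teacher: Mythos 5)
Your proposal follows essentially the same route as the paper's proof: prepare the superposition with $t$ queries of $\bigotimes_{i=1}^t\hat{\mathcal{H}}_i$, coherently compute the weights $W^x_{\pmb{s}_t}$ by reversible arithmetic from the previously estimated $\hat{R}_1,\dots,\hat{R}_{t-1}$, load $W^x_{\pmb{s}_t}r^x_t/\hat{c}$ into an ancilla amplitude, and run amplitude (phase) estimation to precision $\delta=\mathcal{O}(\epsilon/\hat{c})$, giving $\mathcal{O}(t\,\hat{c}/\epsilon)$ queries per round and $\mathcal{O}(\hat{c}T^2/\epsilon)$ overall, with $N=\mathcal{O}(\hat{c}^2/\epsilon^2)$ fixed by the classical Hoeffding bound. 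Your closing remarks on error propagation through the adaptive chain and on median amplification of the success probability address points the paper leaves implicit, but they refine rather than alter the argument.
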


	\begin{proof}
	At each iteration $t$, the querying process can be simply done with
	$\otimes_{i=1}^t \hat{\mathcal{H}}_i$ as shown above.

	Although the evaluating of $\hat{R}_t$, which is taking the average of all 
	branches, cannot be done in superposition as other steps, this can be 
	achieved with phase estimation. With an ancillary qubit
	$\ket{W^x_{\pmb{s}_t}}_M\ket{\pmb{s}_{t}}_{\mathcal{R}_{t}}\ket{0}$ it can 
	be converted to
	$$
		\sqrt{1-W^x_{\pmb{s}_t}/\hat{c}}
		\ket{W^x_{\pmb{s}_t}}_M\ket{\pmb{s}_t}_{\mathcal{R}_t}\ket{0}
		+
		\sqrt{W^x_{\pmb{s}_t}/\hat{c}}
		\ket{W^x_{\pmb{s}_t}}_M \ket{\pmb{s}_t}_{\mathcal{R}_t}\ket{1}
	$$
	if the last bit of $\mathcal{R}_t$ is in state $\ket{1}$, and left
	unchanged if the bit is in state $\ket{0}$ (Lemma 4 in \cite{Dervovic2018}).
	The $\hat{c}$ here is divided to make sure $\sqrt{1-W^x_{\pmb{s}_t}/
	\hat{c}}$ is always real, such that this operation is valid.

	If we start from the superposition of the whole sample, which is obtained
	from querying first $t$ classifiers $\otimes_{i=1}^t\hat{\mathcal{H}}_i$:
	$\frac{1}{\sqrt{N}}\sum_{x,\pmb{s}_t}
		\ket{x}_X \sqrt{q(\pmb{s}_t,x)}
		\ket{W^x_{\pmb{s}_t}}_M\ket{\pmb{s}_t}_{\mathcal{R}_t}
	$, this conditional operator gives
	\begin{equation}
		\begin{aligned}
			&
			\sqrt{1-\sum_{x,\pmb{s}_t}
			\frac{q(\pmb{s}_t|x) r^x_t W^x_{\pmb{s}_t}}
			{\hat{c} N}} |\varphi_0\rangle
			+
			\sqrt{\sum_{x,\pmb{s}_t}
			\frac{q(\pmb{s}_t|x) r^x_t W^x_{\pmb{s}_t}}
			{\hat{c} N}} |\varphi_1\rangle \\
			 & =
			\sqrt{1-\frac{\hat{R}_t}{\hat{c}}} \ket{\varphi_0}
			+
			\sqrt{\frac{\hat{R}_t}{\hat{c}}} \ket{\varphi_1} \ ,
		\end{aligned}
	\end{equation}
	where
	\begin{equation}
		\label{eq:Summation}
		\begin{aligned}
		|\varphi_0\rangle&
			=\frac{1}{\sqrt{N}}\sum_{x, \pmb{s}_t}
			\sqrt{q(\pmb{s}_t|x)}
			\frac{\sqrt{1 - r^x_t W^x_{\pmb{s}_t}/\hat{c}}}
			{\sqrt{1-\tilde{R}_t/\hat{c}}}
			\ket{x}\ket{\pmb{s}_t}
			|W^x_{\pmb{s}_t}\rangle
			\ket{0}\\
		|\varphi_1\rangle&
			=\frac{1}{\sqrt{N}}\sum_{x, \pmb{s}_t}
			\sqrt{q(\pmb{s}_t|x)}
			\frac{\sqrt{r^x_t W^x_{\pmb{s}_t}/\hat{c}}}
			{\sqrt{\tilde{R}_t/\hat{c}}}
			\ket{x}\ket{\pmb{s}_t}
			|W^x_{\pmb{s}_t}\rangle
			\ket{1} \ .
		\end{aligned}
	\end{equation}
	It is not hard to check that they are normalized.

	Rewrite this as
	$\sin(\theta_t)|\varphi_0\rangle+\cos(\theta_t)|\varphi_1\rangle$. The whole
	procedure is then simply rotate the initial state by $\theta_t$. Such
	operation can be used for phase estimation \cite{Shor1997}, where
	$\theta_t$ can be approximated with a constant probability with
	$\mathcal{O}(\frac{1}{\delta})$
	operations. The $\delta$ here is the precision of estimated
	$\hat{\theta}_t$, which is at most linear to the precision of
	$\hat{R}_t/\hat{c} = \cos^2(\theta_t)$.
	$\hat{R}_t$ can be easily calculated from the value of
	$\theta_t$, with precision $\hat{c}\delta$. That is, choosing
	$\delta = \frac{\epsilon}{\hat{c}}$, which means perform
	$\mathcal{O}\qty(\frac{\hat{c}}{\epsilon}) = \sqrt{N}$ times of the
	operation, is enough to estimate $\tilde{R}_t$ with precision $\epsilon$.
	Also, as discussed in the proof of theorem~\ref{thm:CompProb},
	$\abs{\hat{R}_t-\tilde{R}_t}\leq \epsilon$. Combine these precision
	together, it can be seen that our quantum algorithm achieves the same
	order of precision as the classical algorithm~\ref{alg:Classical}.

	Finally, The weight updating part of the classical AdaBoost algorithm can 
	be viewed as an operation on $\ket{\pmb{s}_{t}}_{\mathcal{R}_t}$ and
	$\ket{W^x_{\pmb{s}_t}}_M$ for each iteration $t$. With the updating
	rule~(\ref{eq:Updating}), $W^x_{\pmb{s}_t}$ can be easily obtained with
	$t-1$ iterations from $W^x_{\pmb{s}_1}\equiv 1$ with the information of
	$\tilde{R}_1,\tilde{R}_2, \cdots,\tilde{R}_{t-1}$ and each bit in
	$\mathcal{R}_t$ as control bit. Since in iteration $t+1$, the value
	$\tilde{R}_t$ is fixed for all inputs $x$, the division of $\tilde{R}_t$ can
	be applied simultaneously to all branches, regards the superposed nature of
	the weights. Since this is simply an arithmetic process, these operations
	can be implemented on quantum circuits with the same order of complexity as
	the classical circuits. Also there are no queries needed in this step. See
	appendix for the details of this implementation.

	However, the information of $\tilde{R}_1,\cdots,\tilde{R}_{t-1}$ are
	required for $t^\text{th}$ step, and hence above procedures have to be
	repeated for each $t$. Also, at $t^\text{th}$ step, the weights
	$W^x_{\pmb{s}_t}$ have to be adaptively updated from vary beginning as the
	measurements for phase estimation would disturb the quantum states.
	Therefore, at $t^\text{th}$ iteration, first $t$ basis classifiers need to
	be evaluated and hence the algorithm requires $\sum_{t=1}^T\mathcal{O}
	\qty(\frac{\hat{c}}{\epsilon} t) =
	\mathcal{O}\qty(\frac{\hat{c}}{\epsilon}T^2)$ queries in total. As we 
	choose $N=\mathcal{O}\qty(\frac{\epsilon^2}{\hat{c}^2})$, the query 
	complexity can be rewritten as $\mathcal{O}\qty(\sqrt{N}T^2)$.

	\end{proof}

\textit{Quantum Learning---}Our quantum AdaBoost algorithm is also valid even if the inputs or basis classifiers are quantum, e.g., in the task of quantum template matching \cite{Sasaki2001} that aims to classify quantum states. In such an application, without loss of generality, we can consider the $t$-th binary classifier to be a two-outcome POVM $\qty{M^t_{-1},M^t_{+1}}$, where $M^t_{-1}+M^t_{+1} = \mathbb{I}$ are positive semidefinite operators. Denote the true label of $\rho_x$ by $y(\rho_x)\in\qty{-1,+1}$. It follows that $r^x_t = 0$ if the measurement outcome $h$ is equal to $y(\rho_x)$, and   $r^x_t = 1$ otherwise.
Then the error probability of the $t$-th classifier on input state $\rho_x$ is given by $q_t(r^x_t =1\vert\rho_x)=\text{tr}\qty[\tilde{M}^t_1 \rho_x]$, where $\tilde{M}^t_{r^x_t}:=M^t_h$.

Now consider the task whose goal is to classify all pure states $\ket{\psi}\bra{\psi}$ into two groups, where each state is sampled from the Haar measure on the space of density operators. In general, a single POVM may not work well. However, if $T$ copies of the same state are available, then our quantum AdaBoost algorithm could be used to generate a strong POVM (aka. binary classifier) from individual weak POVM to improve the performance. 

Finally, we remark that the theory of t-design \cite{Zhu2016} could help in the implementation of our algorithm. 
In AdaBoost algorithm, the update rule requires taking expectation over the Haar measure on the space of density operators, which is hard to achieve in practical.
Nevertheless, according to the results from t-design \cite{Zhu2016}, it is possible to exactly simulate the uniform distribution of all pure states, in the sense that there exists a set of pure states $\ket{\psi_j}$ of size $K\approx\binom{d+T/2-1}{T/2}^2$ such that $\frac{1}{K}\sum_j \ket{\psi_j}\bra{\psi_j} =\int d\psi\ket{\psi}\bra{\psi}^{\otimes T}$. 
	


\textit{Discussion and Conclusion ---} In this article, we considered the conventional AdaBoost algorithm for binary classification tasks and extended it to probabilistic and quantum classifiers. The probabilistic extension is a straightforward analogy to quantum algorithms due to the probabilistic nature of quantum physics. As a result, the conventional AdaBoost algorithm can be perfectly recovered under the probabilistic extension. Furthermore, we showed that there exists a quantum algorithm that can optimize the AdaBoost model to the same precision as the classical algorithm with a quadratic speedup in terms of query complexity. Table~\ref{tab:AdaComplex} illustrates the complexities of AdaBoost algorithms in difference scenarios.
	\begin{table}[t]
		\centering
		\caption{Query Complexity of AdaBoost Models}
		\begin{tabular}{|c|c|c|}
			\hline
			AdaBoost Model & Type of Basis Classifier
			\footnote{D for deterministic classifier, P for probabilistic
			classifier, and Q for quantum classifier.} &
			Query Complexity                                                       \\
			\hline
			Conventional   & D                        & $\mathcal{O}(NT)$          \\
			Probabilistic  & D/P                      & $\mathcal{O}(NT)$          \\
			Quantum        & D/P/Q                    & $\mathcal{O}(\sqrt{N}T^2)$ \\
			\hline
		\end{tabular}
		\label{tab:AdaComplex}
	\end{table}
In realistic circumstances, $N\gg T$ holds in the application of  AdaBoost, hence, the query complexity in the quantum case performs better than that in the classical cases.  

Boosting a collection of weak classifiers into a strong classifier is, in particular, suitable for building quantum learning machines because weak classifiers are easier to implement under current quantum hardware technology \cite{1809.06056}. Similar boosting ideas will most likely appear in designing noisy intermediate-scale quantum (NISQ) devices, and our quantum boosting algorithm can be employed to further improve the efficiency including those in \cite{1809.06056}.  


\bibliographystyle{apsrev4-1}
\bibliography{QA}

\pagebreak
\begin{appendix}


\section{Proof of Theorem~\ref{thm:ProbOS}}\label{sec:TO}

	Here we perform our analysis for the probabilistic case, which can
	degenerate to the conventional AdaBoost if the outputs of
	classifiers are certain. Moreover, we assume a certain target label
	$y(x)\in\{+1,-1\}$ exists for all $x\in\Omega$, where $\Omega$ is the sample
	space of all possible inputs. Let $p(x)$ be the probability mass function
	defined on $\Omega$.

	The goal of AdaBoost is to find the optimal coefficients $\{\alpha_t\}$ of
	the linear model $g_T := \sum_{t=1}^T \alpha_t H_t$ based on the basis
	classifiers $H_t : \Omega \to \{+1, -1\}$ with the minimum exponential
	error, which is the average of $e^{-g_T(x) y(x)}$ over the joint
	distribution of inputs and classifiers. Here $\{H_t\}$ are random variables
	which yield the conditional probabilities $\mathbb{P}[H_t(x)=y(x)|x]$.
	Let $r^x_t:=\frac{1}{2}(1 - H_t(x)\cdot y(x))$, that is $r^x_t = 0$
	if $H_t(x) = y(x)$, and $r^x_t = 1$ otherwise. Then
	$\mathbb{P}[H_t(x)=y(x)|x]$ is fully determined by a conditional
	probability mass function $q_t(r^x_t|x)$.

	The exponential error as the cost function
	$C_T:={\mathbb{E}}[e^{g_T(x)y(x)}]$ yields
	\begin{equation}\label{eq:exponential}
		C_T=
		\sum_x p(x)
		\prod_{t=1}^T
		\left(
		\sum_{r_t^x =0,1}
		q_t(r_t^x|x)e^{-\alpha_t (-1)^{r_t^x}}
		\right)
	\end{equation}
	because $H_t(x)y(x) = (-1)^{r^x_t}$. In AdaBoost, the optimization problem
	is done by adding each term into $g_T(x)$ one by one with the optimal weight
	$\alpha_t$ at $t^\text{th}$ iteration. Let $C_t$ be the exponential error of
	the first $t$ terms of $g_T$

	Let $\pmb{s}_{t}\in\mathbb{B}^{t}$ be a binary string
	$s_1s_2\cdots s_t$. Let
	$w^x_{\pmb{s}_{t}}:=\prod_{i=1}^{t}e^{-\alpha_i(-1)^{s_i}}$, and let
	$q(\pmb{s}_{t},x) := p(x)\prod_{i=1}^{t} q_i(r^x_i=s_i|x)$. Then
	equation~(\ref{eq:exponential}) gives
	\begin{equation}
		\begin{aligned}
				& C_t=                   \\
				& \sum_{x\in\Omega,\pmb{s}_{t-1}}
			q(\pmb{s}_{t-1},x)w^x_{\pmb{s}_{t-1}}
			\left(
			\sum_{r^x_t =0,1}
			q_t(r^x_t|x)e^{-\alpha_t (-1)^{r^x_t}}
			\right).
		\end{aligned}
	\end{equation}
	This is a convex function respect to $\alpha_t$, and an unique solution to the problem exists at the extreme. Taking its derivative to $0$ gives
	\begin{multline}
		\sum_{x,\pmb{s}_{t-1}} q(\pmb{s}_{t-1},x)w^x_{\pmb{s}_{t-1}}
		\left(
		-q_t(0|x)e^{-\alpha_t}
		+
		q_t(1|x)e^{\alpha_t}
		\right) \\
		=0
	\end{multline}
	and hence
	\begin{equation}
		\begin{aligned}
				& e^{2\alpha_t} &
			\sum_{x,\pmb{s}_{t-1}}
			q(\pmb{s}_{t-1},x)w^x_{\pmb{s}_{t-1}} q_t(1|x) \\
			= &               & \sum_{x,\pmb{s}_{t-1}}
			q(\pmb{s}_{t-1},x)w^x_{\pmb{s}_{t-1}} q_t(0|x) .
		\end{aligned}
	\end{equation}

	That is
	\begin{equation}
		\label{eq:ProbAlpha}
		\alpha_t
		=
		\frac{1}{2}
		\ln
		\frac
		{\sum_{x,\pmb{s}_{t-1}}
		q(\pmb{s}_{t-1},x)w^x_{\pmb{s}_{t-1}} q_t(0|x)}
		{\sum_{x,\pmb{s}_{t-1}}
		q(\pmb{s}_{t-1},x)w^x_{\pmb{s}_{t-1}} q_t(1|x)}.
	\end{equation}

	Let
	\begin{equation}
		\tilde{R}_t:=
		\frac
		{\sum_{x,\pmb{s}_{t-1}}
		q(\pmb{s}_{t-1},x)w^x_{\pmb{s}_{t-1}} q_t(1|x)}
		{\sum_{x,\pmb{s}_{t-1}}
		q(\pmb{s}_{t-1},x)w^x_{\pmb{s}_{t-1}}} \ .
	\end{equation}
	Then the optimal weight of each iteration is
	\begin{equation}
		\alpha_t
		=
		\frac{1}{2}
		\ln
		\left(
		\frac{1-\tilde{R}_t}{\tilde{R}_t}
		\right).
	\end{equation}

	In the following, we demonstrate that the optimal weight $\{\alpha_t\}$
	can be adaptively obtained. When $t=1$, initialize
	$w^x_{\pmb{s}_0}\equiv 1$. Thus
	\begin{equation}
		\begin{aligned}
			\tilde{R}_1
				& =
			\frac
			{\sum_{x}p(x) q_1(1|x)}
			{\sum_{x}p(x)}        \\
				& =
			\sum_{x}p(x) q_1(1|x) \\
				& =
			\mathbb{E}_{p\times q_1}[r^x_1]
		\end{aligned}
	\end{equation}
	which is exactly the generalization error of $H_1$.\\

	Let $Z_t = \sum_{x,\pmb{s}_{t-1}}q(\pmb{s}_{t-1},x)
	w^x_{\pmb{s}_{t-1}}$ be the $t^\text{th}$ normalization factor. Then
	\begin{equation}
		\label{eq:rt}
		\begin{aligned}
			\tilde{R}_t & =
			\sum_{x,\pmb{s}_{t-1}} q(\pmb{s}_{t-1},x)q_t(1|x)
			\frac{w^x_{\pmb{s}_{t-1}}}{Z_t}                          \\
				& =\sum_{x,\pmb{s}_{t-1}}q(\pmb{s}_{t-1},x)\sum_{r^x_t=0,1}
			q_t(r^x_t|x) \frac{w^x_{\pmb{s}_{t-1}}}{Z_t} r^x_t       \\
				& =	\mathbb{E}_q
			\left[
			\frac{w^x_{\pmb{s}_{t-1}}}{Z_t}r^x_t
			\right].
		\end{aligned}
	\end{equation}

	By definition $w^x_{\pmb{s}_{t}}=w^x_{\pmb{s}_{t-1} r_t} = w^x_{\pmb{s}_{t-1}}
		e^{-\alpha_t(-1)^{r_t}}$.  Therefore
	\begin{equation}
		\begin{aligned}
			Z_{t+1}
				& = \sum_{x,\pmb{s}_{t}}q(\pmb{s}_{t},x)w^x_{\pmb{s}_{t}} \\
				& = \sum_{x,\pmb{s}_{t-1}}q(\pmb{s}_{t-1}|x)w^x_{\pmb{s}_{t-1}}
			(q_t(0|x)e^{-\alpha_t}+q_t(1|x)e^{\alpha_t})       		   \\
				& =e^{-\alpha_t} (Z_t \tilde{R}_t e^{2\alpha_t} +
				Z_t(1-\tilde{R}_t))                        \\
				& =e^{-\alpha_t} (Z_t \tilde{R}_t
				\frac{1 -\tilde{R}_t}{\tilde{R}_t} + Z_t(1-\tilde{R}_t))  \\
				& =2e^{-\alpha_t}(1-\tilde{R}_t)Z_t.
		\end{aligned}
	\end{equation}

	Let 
	\begin{equation}
		\label{eq:r1}
		\begin{aligned}
				& W^x_{\pmb{s}_{t+1}} := \frac{w^x_{\pmb{s}_{t}}}{Z_{t+1}}            \\
			= & \frac{w^x_{\pmb{s}_{t-1}}}{Z_{t+1}} e^{-\alpha_t(-1)^{r_t}}      \\
			= & W^x_{\pmb{s}_{t}} \frac{1}{2(1-\tilde{R}_t)}e^{\alpha_t(1-(-1)^{r_t})}.
		\end{aligned}
	\end{equation}

	Therefore, all the values of $W^x_{\pmb{s}_{t}}$ can be obtained by
	iterating with the information of $\tilde{R}_t$ . It is not hard to
	check that (\ref{eq:r1}) is equivalent to the updating
	rule~ \eqref{eq:Updating}. These values again yield
	$\tilde{R}_{t+1}={\mathbb{E}}[W^x_{\pmb{s}_{t+1}} r^x_{t+1}]$ for next
	iteration, and therefore every $\alpha_t$ could be determined
	analytically in this manner.


\section{Proof of Theorem~\ref{thm:CompProb}}\label{sec:ClassA}

	In the section~\hyperref[sec:TO]{Proof of Theorem~\ref{thm:ProbOS}}, a
	theoretical optimal solution to the AdaBoost Model is derived. However,
	in practice, the underlying distribution of inputs is unknown, and
	therefore the values of $q(\pmb{s}_{t}|x)$ are impossible to
	be evaluated. Also, usually the training algorithm cannot cover the
	whole sample space (otherwise the explicit relationship between inputs
	and output are known, and machine learning is unnecessary).

	Similar to other machine learning tasks, this problem is solved by
	sampling.  Clearly, with a underlying distribution $D$ on the sample
	space $\Omega$, each $r^x_t$ can be viewed as a random variable on the
	sample space. This can be done with an interesting result derived from
	Hoeffding's inequality.
	\begin{thm}[Hoeffding's inequality]
		\label{thm:MDinequal}
		If a sample $S$ of size $N$ is drawn from a distribution $D$ on a
		sample space $\Omega$, then given a random variable $X$ on $\Omega$
		and any positive number $\epsilon>0$
		\begin{equation}
			\mathbb{P}
			\qty[
				\abs{\frac{1}{N}\sum_{x\in S} X(x) -
				\underset{D}{\mathbb{E}}[X]}
				\geq \epsilon
			]\leq 2\exp(-\frac{2N\epsilon^2}{\hat{c}^2})
		\end{equation}
		where $c := \qty[\sup{X}-\inf{X}]$.
	\end{thm}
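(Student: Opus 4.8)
The plan is to follow the standard Chernoff-bounding (exponential Markov) argument, since this is precisely the classical Hoeffding inequality specialized to $N$ i.i.d.\ samples of a single bounded random variable with range $c = \sup X - \inf X$. First I would reduce the two-sided claim to a one-sided tail bound. Writing $\bar X := \frac{1}{N}\sum_{x\in S} X(x)$, it suffices to establish $\mathbb{P}[\bar X - \mathbb{E}_D[X] \geq \epsilon] \leq \exp(-2N\epsilon^2/c^2)$; applying that same bound to $-X$ controls the lower tail, and a union bound over the two events supplies the overall factor of $2$.

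For the one-sided tail I would invoke the exponential Markov inequality: for any $\lambda > 0$,
\begin{equation}
\mathbb{P}[\bar X - \mathbb{E}_D[X] \geq \epsilon] \leq e^{-\lambda\epsilon}\,\mathbb{E}\big[e^{\lambda(\bar X - \mathbb{E}_D[X])}\big].
\end{equation}
Because $S$ consists of $N$ independent draws from $D$, the moment-generating function on the right factorizes as $\prod_{x\in S}\mathbb{E}\big[e^{(\lambda/N)(X(x)-\mathbb{E}_D[X])}\big]$, so the whole problem collapses to bounding the MGF of a single centered variable supported on an interval of length $c$.

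The key technical ingredient, and the step I expect to be the main obstacle, is Hoeffding's lemma: for a mean-zero random variable $Y$ taking values in an interval of length $c$, one has $\mathbb{E}[e^{tY}] \leq e^{t^2 c^2/8}$. I would prove this by exploiting convexity of $s\mapsto e^{ts}$ to bound $e^{tY}$ pointwise by the chord through the interval's endpoints, taking expectations, and then analyzing the resulting function of $t$. Concretely, I would write $\psi(t) := \log\mathbb{E}[e^{tY}]$, note $\psi(0)=\psi'(0)=0$, and identify $\psi''(t)$ as the variance of $Y$ under the exponentially tilted law; since any variable supported on an interval of length $c$ has variance at most $c^2/4$, a second-order Taylor expansion gives $\psi(t) \leq t^2 c^2/8$. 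This is where all of the bounded-range structure enters, and the variance/convexity estimate is the only genuinely delicate part.

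Finally I would assemble the pieces: each of the $N$ factors is bounded by $e^{(\lambda/N)^2 c^2/8}$, so the one-sided tail is at most $\exp\!\big(-\lambda\epsilon + \lambda^2 c^2/(8N)\big)$. Minimizing the exponent over $\lambda$ (the minimizer is $\lambda = 4N\epsilon/c^2$) yields $\exp(-2N\epsilon^2/c^2)$, and combining with the union bound from the reduction step completes the proof. Everything after Hoeffding's lemma is routine optimization, so I would devote the bulk of the write-up to the lemma itself.
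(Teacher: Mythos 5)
Your proof is correct: it is the standard Chernoff-bound argument (reduction to a one-sided tail, factorization of the moment-generating function over the $N$ independent draws, Hoeffding's lemma via the variance bound $c^2/4$ on an interval of length $c$, and optimization over $\lambda$), and all the constants check out. The paper itself offers no proof of this statement --- it is quoted as the classical Hoeffding inequality and used as a black box --- so there is no alternative route to compare against; your write-up would in fact supply the missing derivation. One small point worth flagging: the paper's statement writes $\hat{c}^2$ in the exponent but defines $c := \sup X - \inf X$, so you should note that the intended quantity is the range $c$ appearing in your proof (the paper later upper-bounds this range by $\hat{c}$ when applying the theorem to $W^x_{\pmb{s}_t} r^x_t$).
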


	%

	The key point here is that, though
	$\underset{x\sim D}{\mathbb{E}}[X(x)]$ cannot be evaluated in practice,
	$\frac{1}{N}\sum_{x\in S} X(x)$ is computable, and it approximates
	$\underset{x\sim D}{\mathbb{E}}[X(x)]$ well when $N = |S|$ is large.

	According to equation~(\ref{eq:rt}),
	$\tilde{R}_t =
	\mathbb{E}_q
	\left[
		W^x_{\pmb{s}_t}r^x_t
	\right].$
	For a sample $S$ of pairs
	$(x, \pmb{s}_{t})$ drawn from the distribution $q(\pmb{s}_t,x)$, let
	\begin{equation}
		\label{eq:ApproxAda}
		\hat{R}_t
		:=
		\frac{1}{N}\sum_{x\in S}\left[W^x_{\pmb{s}_{t}} r^x_t\right] .
	\end{equation}

	Then theorem \ref{thm:MDinequal} shows that
	\begin{equation}
		\mathbb{P}
		\left(
		|\tilde{R}_t-\hat{R}_t|
		\geq
		\epsilon
		\right)
		\leq
		2e^{-2\epsilon^2N/c_t^2} \ ,
	\end{equation}
	where $N$ is the size of $S$ and
	$c_t := \max (W^x_{\pmb{s}_{t}} r^x_t)-\min (W^x_{\pmb{s}_{t}} r^x_t)$.

	To be noticed, the value of $W^x_{\pmb{s}_t}$ is derived with iteration
	according to equation~(\ref{eq:r1}). Since $W^x_{\pmb{s}_1}\equiv 1$,
	$W^{x}_{\pmb{s}_t}$ is always positive, which means
	$\min(W^x_{s_t} r^x_t)$ is always non-negative as well. Further,
	$\max{W^x_{s_t}r^x_t}\leq \max{W^x_{s_t}}:= \hat{c}_t$.

	Therefore, for a target precision $\epsilon$ of $\hat{R}_t$, a sample
	with size $N = \mathcal{O}\left(\frac{\hat{c}_t^2}{\epsilon^2} \right)$
	is good enough to achieve the goal with a constant probability.
	Nevertheless, the size of sample have to be determined before hand; and
	hence we should choose
	\begin{equation}
		\label{eq:SampleSize}
	N = \mathcal{O}\qty(\frac{\hat{c}^2}{\epsilon^2}) \ ,
	\end{equation}
	where $\hat{c}=\max\qty{\hat{c}_t}$.

	\begin{rem}
		However, $\hat{c}$ might not be small when $T$ is large, which
		indicates that AdaBoost may not be good if the model does not
		converge fast with the number of classifiers used. These might be
		improved by other boosting algorithms, e.g. LogitBoost, Gradient
		Boosting, XGBoosting.
	\end{rem}

	As long as we obtain a sample $S$ of size $N$, according to
	theorem~\ref{thm:MDinequal}, the algorithm~\ref{alg:Classical} 
	approximates $\tilde{R}_t$ well. This algorithm evaluates each data 
	$x\in S$ for each classifier $H_t$, and therefore requires
	$\mathcal{O}(NT)$ queries.

\section{Quantum Simulation of Classical Process}
	This section reviews some results from Kitaev's paper
	\cite{Kitaev1995} that simulate classical Boolean circuits with
	quantum circuits. For convenient, without loose of generality the
	classical registers are denoted with Dirac notations here.

	According to lemma~1 and 7 in \cite{Kitaev1995}, if a function
	$F:\mathbb{B}^n\to\mathbb{B}^m$ can be computed with $L$ Boolean
	operations $g\in$ a basis $\mathcal{B}$, which is a small set of
	Boolean operations, then it can be computed with $2L+m$ operations in
	the basis $\mathcal{B}_\tau$. The basis $\mathcal{B}_\tau$ is defined
	in a way that, for each $g\in\mathcal{B}:\ket{x}_X\to\ket{g(x)}_B$,
	there is a
	$ g_\tau\in\mathcal{B}_\tau:\ket{x}_X\ket{v}_B
	\to \ket{x}_X\ket{v\oplus g(x)}_B \ . $
	Also the operation to copy a state
	\begin{equation}
		\label{eq:Tau}
		\tau_{A,B}:\ket{x}_A\ket{v}_B\to\ket{x}_A\ket{v\oplus x}_B
	\end{equation}
	(which is indeed a CNOT gate) have to be included into
	$\mathcal{B}_\tau$.

	Furthermore, we say a circuit computes a Boolean function $F$, if it
	converts $\ket{x}_X\to\ket{F(x)}_B$. With
	basis $\mathcal{B}_\tau$, this is computation is performed as
	$\ket{x}_X\ket{0}_B \to \ket{x}_X\ket{F(x)}_B$.


	However, one may only need partial information about the output $F(x)$.
	Classically, it is free to readout part of the the output bits and drop
	the rest. Nevertheless, in quantum computation, dropping those
	``garbage'' bits ($\ket{\text{gar}(x)}$) would destroy the quantum
	state if they are in superposition. But as shown above,
	$\ket{x}_X\ket{0}_B\to\ket{x}_X\ket{F(x)}_B$ can be
	constructed with $2L+m$ reversible gates. Divide register $B$ into two
	parts $\qty(B_1,B_2)$, and then $\ket{F(x)}_B$ is
	$\ket{f(x)}_{B_1}\otimes\ket{\text{gar}(x)}_{B_2}$ (it is always
	separable as the initial states are all tensor product states), above
	process is then
	$$
	\ket{x}_X\ket{0}_{B_1}\ket{0}_{B_2}\to
	\ket{x}_X\ket{f(x)}_{B_1}\ket{\text{gar}(x)}_{B_2} .
	$$ By repeating this process on an extra register $B^\prime=
	\qty(B_1^\prime,B_2^\prime)$, the process
	$$
	\ket{x}_X\ket{0}_{B_1}\ket{0}_{B_2}
	\ket{0}_{B_1^\prime}\ket{0}_{B_2^\prime}
	$$
	$$
	\to \ket{x}_X\ket{H(x)}_{B_1}\ket{\text{gar}(x)}_{B_2}
	\ket{f(x)}_{B_1^\prime} \ket{\text{gar}(x)}_{B_2^\prime}
	$$ can be
	constructed.

	If the input state is on quantum registers and it is in superposition
	$$
	\qty(\frac{1}{\sqrt{N}}\sum_x\ket{x}_X)\ket{0}_{B_1}\ket{0}_{B_2}
	\ket{0}_{B_1^\prime}\ket{0}_{B_2^\prime},
	$$
	this process will give
	$$
	\frac{1}{\sqrt{N}}\sum_x\ket{x}_X
	\ket{f(x)}_{B_1}\ket{\text{gar}(x)}_{B_2}
	\ket{f(x)}_{B_1^\prime}\ket{\text{gar}(x)}_{B_2^\prime} .
	$$
	Then the pairwise operation $\tau_{B_2^\prime,B_2}$ \eqref{eq:Tau} is
	performed between the ``garbage'' states on $B_2$ and $B_2^\prime$,
	which gives
	$$
	\frac{1}{\sqrt{N}}\sum_x\ket{x}_X
	\ket{f(x)}_{B_1}\ket{\text{gar}(x)\oplus\text{gar}(x)}_{B_2}
	\ket{f(x)}_{B_1^\prime}\ket{\text{gar}(x)}_{B_2^\prime}
	$$
	$$
	= \frac{1}{\sqrt{N}}\sum_x\ket{x}_X
	\ket{f(x)}_{B_1}\ket{0}_{B_2}
	\ket{f(x)}_{B_1^\prime}\ket{\text{gar}(x)}_{B_2^\prime} .
	$$

	Finally, the original process is
	performed again on $B^\prime$ which ends up at
	$$
	\qty(\frac{1}{\sqrt{N}}\sum_x\ket{x}_X
	\ket{f(x)}_{B_1})\ket{0}_{B_2}
	\ket{0}_{B_1^\prime}\ket{0}_{B_2^\prime} .
	$$
	Since the appending registers
	$B_2, B_1^\prime, B_2^\prime$ are all end up at $\ket{0}$, it
	is free to drop them after computation.

	In summary, the process $\ket{x}_X\ket{0}\to\ket{x}_X\ket{f(x)}$ can be
	achieved with $\mathcal{O}(L)$ quantum gates even for computation in
	superposition. This fact indicates that each arithmetic part in our
	quantum algorithm can be performed with the same complexity of the
	classical algorithm. Since all ancillary registers always start and end
	at $\ket{0}$, they are neglected in our notation for simplicity.

	Note that, although above result is only valid for Boolean functions,
	as how modern computers work, these Boolean operations are indeed
	universal. In case people want to deal with real numbers on computers,
	those values have to be encoded into binary strings up to some
	precision.

	\begin{exmp}
		\label{exmp:Arithmestic}
		The updating rule~(\ref{eq:Updating}) is purely
		arithmetic.	This can be viewed as repeating controlled operation
		$\mathcal{U}$ on a register $M$, encoding a numerical value $\xi$
		in terms of binary strings $\ket{\xi}$ up to some precision. Each
		application of $\mathcal{U}$ is controlled by each qubit of the
		string $\ket{\pmb{s}_{t}}_{\mathcal{R}_t}:=
		\ket{s_1}\otimes\cdots\otimes\ket{s_t}$. More precisely,
		$$
			\mathcal{U}_i =
			U_0\otimes \ket{s_i=0}\bra{s_i=0}_i
			+ U_1\otimes \ket{s_i=1}\bra{s_i=1}_i \ ,
		$$
		where $U_0\ket{\xi}_M = \ket{\frac{\xi}{2(1-R_t)}}_M$;
		$U_1\ket{\xi}_M = \ket{\frac{\xi}{2R_t}}_M$. As a result,
		lines~\ref{line:start}-\ref{line:end} in
		algorithm~\ref{alg:Classical} can be performed in quantum circuits
		with the same order of gates as classical circuit. Additionally,
		this can be done in superposition for all $x$, and hence the
		``for'' loop in classical algorithm can be done in one shot.
	\end{exmp}

	Similarly, another step for phase estimation in our algorithm can be
	done with this method.
	\begin{exmp}
		\label{exmp:Rotate}
		There exists an operation $\mathcal{Q}_t$ such that for
		$\xi\in\qty[0,\hat{c}]$,
		\begin{equation}
			\label{eq:RotateOperator}
			\mathcal{Q}_t\ket{\xi}_M\ket{0}
			=\ket{\xi}_M(\sqrt{1-\frac{\xi}{\hat{c}}}\ket{0} +
			\sqrt{\frac{\xi}{\hat{c}}}\ket{1} \ .
		\end{equation}
		The requirement of $\xi\in\qty[0,\hat{c}]$ is presented to make sure
		$\cos^{-1}\sqrt{\frac{\xi}{\hat{c}}}$ is a real number, and 
		therefore, the state
		$\ket{\cos^{-1}\sqrt{\frac{\xi}{\hat{c}}}}_{\text{anc}}$
		can be constructed on an ancillary register ``anc''. Here
		$\ket{A}_{\text{anc}} = \ket{a_1}\ket{a_2}\cdots\ket{a_m}$, where
		$0.a_1a_2\cdots a_m$ is the binary representation of the real
		number $A$ up to some precision. The process to compute
		$\ket{\xi}_M\ket{0}_{\text{anc}}\to
		\ket{\xi}_M\ket{\cos^{-1}\sqrt{\frac{\xi}{\hat{c}}}}$ is arithmetic.
		By further appending an additional qubit $\ket{0}$ to the system,
		an operation can be constructed as lemma 4 in \cite{Dervovic2018},
		such that it converts $\ket{\xi}_M\ket{A=
		\cos^{-1}\sqrt{\frac{\xi}{\hat{c}_t}}}_{\text{anc}}\ket{0}$ to
		$$\ket{\xi}_M\ket{A}_{\text{anc}}(\sin(A)\ket{0}+\cos(A)\ket{1})$$
		$$=\ket{\xi}_M\ket{A}_{\text{anc}}
		(\sqrt{1-\frac{\xi}{\hat{c}}}\ket{0}
		+ \sqrt{\frac{\xi}{\hat{c}_t}}\ket{1}).$$
		Finally the register $\ket{A}_\text{anc}$ can be cleared and
		dropped with the garbage dropping technique above. This whole 
		process is exactly the operation $\mathcal{Q}_t$.
	\end{exmp}

	The operations in these examples would be useful in next section.

\section{Proof of Theorem~\ref{thm:CompQuantum}}\label{sec:QAda}

	In the Quantum AdaBoost Algorithm, the computation other then the
	average of $r^x_t W^x_{\pmb{s}_{t}}$ can be performed in parallel on the
	whole sample. That is, for every initial state $\ket{x}_X$, where $x$
	is the data points of a sample $S$ of size $N$ drawn from the sample
	space $\Omega$, the classical algorithm outputs
	$|r^x_t W^x_{\pmb{s}_{t}}\rangle_{M}$ to the register $M$, which
	encoding the numerical value of $r^x_t W^x_{\pmb{s}_{t}}$. Note that
	$\ket{r^x_t W^x_{\pmb{s}_{t}}}_M$ here is the state corresponding to the
	binary value of $r^x_t W^x_{\pmb{s}_{t}}$, as how modern computer saves
	numerical values. With this property, the AdaBoost algorithm can be
	performed by following adaptive procedure:

	At $t^\text{th}$ iteration, given the classical information of
	$\tilde{R}_1, \cdots,\tilde{R}_{t-1}$ (where $\tilde{R}_t$ can be
	obtained in $t^\text{th}$ iteration), initialize the state of three
	registers $X$, $M$, $\mathcal{R}_t$ as
	\begin{equation}
		\label{eq:Init}
		\frac{1}{\sqrt{N}}
		\sum_{x\in S}
		\ket{x}_X\otimes\ket{W^x_{\pmb{s}_1}\equiv 1}_M
		\otimes\ket{0}_{\mathcal{R}_t}^{\otimes t} ,
	\end{equation}
	with access to the quantum oracle
	$\hat{\mathcal{H}}_1\otimes\cdots\otimes\hat{\mathcal{H}}_t$ defined in
	\eqref{eq:QQuery}, one can obtain
	\begin{equation}
		\label{eq:Query}
		\frac{1}{\sqrt{N}}
		\sum_{x\in S}\sum_{\pmb{s}_t}
		\ket{x}_X\otimes\ket{W^x_{\pmb{s}_1}}_M\otimes
		\sqrt{q(\pmb{s}_t\vert x)}\ket{\pmb{s}_t}_{\mathcal{R}_t} .
	\end{equation}
	With the classical information of $\tilde{R}_1,\cdots,\tilde{R}_{t-1}$,
	one can update the register $M$ with the updating
	rule~(\ref{eq:Updating}) (which is a classical arithmetic process shown
	in example~\ref{exmp:Arithmestic}) to the state
	\begin{equation}\label{eq:QuantumClassical}
		\frac{1}{\sqrt{N}}
		\sum_{x\in S}\sum_{\pmb{s}_{t}}
		\ket{x}_X\otimes\ket{W^x_{\pmb{s}_{t}}}_M\otimes
		\sqrt{q(\pmb{s}_{t}\vert x)}\ket{\pmb{s}_{t}}_{\mathcal{R}_t} \ .
	\end{equation}
	Compose the whole arithmetic process that converts \eqref{eq:Init}
	to \eqref{eq:QuantumClassical} and rewrite it as $\mathcal{A}_t$:
	\begin{equation}
		\label{eq:Arithmatic}
		\begin{aligned}
		&\mathcal{A}_t
		\frac{1}{\sqrt{N}} \sum_{x\in S}
		\ket{x}_X\otimes\ket{W^x_{\pmb{s}_1}}_M\otimes
		\ket{0}_{\mathcal{R}_t}\\
		=&
		\frac{1}{\sqrt{N}}
		\sum_{x\in S}\sum_{\pmb{s}_{t}}\sqrt{q(\pmb{s}_{t}\vert x)}
		\ket{x}_X\otimes\ket{W^x_{\pmb{s}_{t}}}_M\otimes
		\ket{\pmb{s}_{t}}_{\mathcal{R}_t} \ ,
		\end{aligned}
	\end{equation}

	With an extra working register, apply the operation in
	example~\ref{exmp:Rotate} to the final state in
	\eqref{eq:Arithmatic}
	\begin{equation}
		\label{eq:PrepPhaseEstimation}
		\begin{aligned}
				& \mathcal{Q}_t \mathcal{A}_t
			\frac{1}{\sqrt{N}}
			\sum_{x\in S^\prime}
			\ket{x}_X\ket{W^x_{\pmb{s}_1}}_M
			\ket{0}_{\mathcal{R}_t}
			\ket{0} =  \\
				&
			\sqrt{
			1-
			\sum_{x, \pmb{s}_t}
			\frac{q(\pmb{s}_t|x) r^x_t W^x_{\pmb{s}_t}}{\hat{c} N}
			}
			|\varphi_0\rangle\ket{0}
			+
			\sqrt{
			\sum_{x, \pmb{s}_t}
			\frac{q(\pmb{s}_t|x) r^x_t W^x_{\pmb{s}_t}}{\hat{c} N}
			}
			|\varphi_1\rangle\ket{1} ,
		\end{aligned}
	\end{equation}
	where
	\begin{equation}
		\begin{aligned}
		\ket{\varphi_0}&
			=\frac{1}{\sqrt{N}}\sum_{x, \pmb{s}_t}
			\sqrt{q(\pmb{s}_t|x)}
			\frac{\sqrt{1 - r^x_t W^x_{\pmb{s}_t}/\hat{c}}}
			{\sqrt{1-\hat{R}_t/\hat{c}}}
			\ket{x}_X
			\ket{W^x_{\pmb{s}_t}}_M
			\ket{\pmb{s}_t}_{\mathcal{R}_t}\\
		\ket{\varphi_1}&
			=\frac{1}{\sqrt{N}}\sum_{x, \pmb{s}_t}
			\sqrt{q(\pmb{s}_t|x)}
			\frac{\sqrt{r^x_t W^x_{\pmb{s}_t}/\hat{c}}}
			{\sqrt{\hat{R}_t/\hat{c}}}
			\ket{x}_X
			\ket{W^x_{\pmb{s}_t}}_M
			\ket{\pmb{s}_t}_{\mathcal{R}_t} .
		\end{aligned}
	\end{equation}
	Note that for each $x$, $\sum_{\pmb{s}_t} q(\pmb{s}_t|x) = 1$.

	According to the definition in equation~\eqref{eq:ApproxAda}, the result of \eqref{eq:PrepPhaseEstimation} is indeed
	\begin{equation}\label{eq:QAda}
		\sqrt{1-\frac{\hat{R}_t}{\hat{c}}}|\varphi_0\rangle\ket{0}
		+
		\sqrt{\frac{\hat{R}_t}{\hat{c}}}|\varphi_1\rangle\ket{1} .
	\end{equation}
	This can be rewrite as
	\begin{equation}\label{eq:Rotate}
		\ket{\psi_0} :=
		\sin(\theta_t)\ket{\varphi_0}\ket{0}
		+
		\cos(\theta_t)\ket{\varphi_1}\ket{1} ,
	\end{equation}
	which performs a rotation of angle $\theta_t$.

	Let $|\psi_1\rangle := \cos(\theta_t)|\varphi_0\rangle\ket{0}
	- \sin(\theta_t)|\varphi_1\rangle\ket{1}$. After a Pauli-$\mathcal{Z}$
	operation is performed on the last register of $\ket{\psi_0}$, it is
	transformed to
	\begin{equation}
		\label{eq:Rotated}
		\begin{aligned}
			&  & \sin(\theta_t)
			[\sin(\theta_t)|\psi_0\rangle+\cos(\theta_t)|\psi_1\rangle] \\
			&- & \cos(\theta_t)
			[\cos(\theta_t)|\psi_0\rangle-\sin(\theta_t)|\psi_1\rangle] \\
			=& &
			\cos(2\theta_t)|\psi_0\rangle + \sin(2\theta_t)|\psi_1\rangle .
		\end{aligned}
	\end{equation}

	Let $\mathcal{G}_t:=\mathcal{Q}_t\mathcal{A}_t$. Apply the inverse
	operation $\mathcal{G}_t^\dagger$ to \eqref{eq:Rotated}, so that
	$\ket{\psi_0}$ is mapped back to the initial state \eqref{eq:Init}.
	Note that, $|\psi_1\rangle$ is orthogonal to $|\psi_0\rangle$ and our
	operation is unitary. Therefore, if an operation $U_\perp$ only
	inverse the amplitude of the every state perpendicular to the initial
	state (analogy to the diffusion operator in Grover's algorithm
	\cite{Grover1996}) is applied and the operation $\mathcal{G}_t$ is
	performed again, $\ket{\psi_0}$ would be left unchanged. This procedure
	gives
	\begin{equation}
		\begin{aligned}
			&  & \cos(2\theta_t)|\psi_0\rangle
				- \sin(2\theta_t)|\psi_1\rangle \\
			= && \cos(2\theta_t)
			[
				\sin(\theta_t)|\varphi_0\rangle\ket{0}
				+
				\cos(\theta_t)|\varphi_1\rangle\ket{1}
			]                                                                 \\
				& - & \sin(2\theta_t)
			[
				\cos(\theta_t)|\varphi_0\rangle\ket{0}
				-
				\sin(\theta_t)|\varphi_1\rangle\ket{1}
			]                                                                 \\
			= && \cos(3\theta_t)|\varphi_0\rangle\ket{0}
			+
			\sin(3\theta_t)|\varphi_1\rangle\ket{1} .
		\end{aligned}
	\end{equation}

	In conclusion,
	$(\mathcal{G}_tU_\perp\mathcal{G}_t^\dagger\mathcal{Z})^k\mathcal{G}_t$
	converts the initial state to
	\begin{equation}
		\label{eq:Phasek}
	\cos((2k+1)\theta_t)|\varphi_0\rangle\ket{0}
	+
	\sin((2k+1)\theta_t)|\varphi_1\rangle\ket{1} \ .
	\end{equation}
	Such operation provides the possibility to estimate $\theta_t$ with
	phase estimation algorithm.

	To fairly compare the query complexities, we want to constrain the
	results from both classical and quantum algorithm to the same
	precision. In order to approximate $\tilde{R}_t$ with the target
	precision $\mathcal{O}(\epsilon)$, the phase estimation algorithm have
	to estimate $\hat{R}_t = \cos^2(\theta_t)\hat{c}$ with precision
	$\epsilon$, and as shown in \eqref{eq:SampleSize}, a sample of size
	$N=\mathcal{O}\qty(\frac{\hat{c}^2}{\epsilon^2})$ is enough to estimate
	each $\tilde{R}_t$ with $\hat{R}_t$ with precision $\epsilon$.

	In the $t^\text{th}$ step of our quantum algorithm, by choosing number 
	of iterations in \eqref{eq:Phasek} to be
	$k=\mathcal{O}\qty(\frac{1}{\delta})$, the phase
	estimation process could read out the value of $\hat{\theta}_t$, such 
	that $\abs{\theta_t-\hat{\theta}_t}\leq\delta$.

	In order to estimates $\tilde{R}_t$ with the same precision as
	the classical algorithm, we need to bound $\hat{\theta}$ to make sure
	$$|\hat{c} \cos^2(\hat{\theta}_t)-\hat{R}_t|\leq \epsilon \ .$$
	This can be done by choose a proper $\delta$. Then, the task of our 
	analysis is to bound the value of $\delta$ in terms of $\epsilon$ and 
	$\hat{c}$ as in the classical case.

	Let $\hat{\epsilon}:= \hat{c} \cos^2(\hat{\theta}_t)-\hat{R}_t$, then
	$\abs{\hat{\theta}_t-\theta_t} \leq \delta$ gives
	$\abs{ \cos^{-1}(\sqrt{\frac{\hat{R}_t+\hat{\epsilon}}{\hat{c}}})
	- \cos^{-1}(\sqrt{\frac{\hat{R}_t}{\hat{c}}}) } \leq \delta.$
	Since $\abs{\hat{\epsilon}}\leq \epsilon$, $\hat{\epsilon}$ is a small
	number, and hence
	\begin{equation}
		\begin{aligned}
		&\abs{
			\cos^{-1}(\sqrt{\frac{\hat{R}_t+\hat{\epsilon}}{\hat{c}}})
			-
			\cos^{-1}(\sqrt{\frac{\hat{R}_t}{\hat{c}}})
		}\\
		\approx & \abs{\frac{\hat{\epsilon}}{\hat{c}}
		\frac{d}{dx}\cos^{-1}(\sqrt{x})\vert_{x=\hat{R}_t/\hat{c}}} \ .
		\end{aligned}
	\end{equation}
	When $\frac{\hat{R}_t}{\hat{c}} \sim 0$,
	$\frac{d}{dx}\cos^{-1}(\sqrt{x})\vert_{x=\hat{R}_t/\hat{c}}$ is
	almost a constant, and $\abs{\hat{\theta}_t-\theta_t} = \mathcal{O}
	\qty(\frac{\hat{\epsilon}}{\hat{c}})\leq\delta$. This is usually true
	since $0\leq \tilde{R}_t\leq 1$ and $\hat{c} \gg 1$. Note that
	$\hat{c} =\max\qty{W^x_{\pmb{s}_t}}=\prod^T_{t=1}
	\max\qty{\frac{1}{2\tilde{R}_t},\frac{1}{2(1-\tilde{R}_t)}}$, and
	$\max\qty{\frac{1}{2\tilde{R}_t},\frac{1}{2(1-\tilde{R}_t)}}\geq 1$.
	To make sure
	$\abs{\hat{c}\cos^2(\hat{\theta}_t)-\hat{R}_t} \leq \epsilon$, or
	equivalently $\abs{\hat{\epsilon}}\leq\epsilon$, the optimal $\delta$
	can be chosen is $\mathcal{O}\qty(\frac{\epsilon}{\hat{c}})$. This
	gives $k = \mathcal{O} \qty(\frac{\hat{c}}{\epsilon})$.

	Moreover, for $t^\text{th}$ iteration, the step \eqref{eq:Query}
	requires $t$ queries. So the query complexity for each iteration is
	$\mathcal{O}\qty(\frac{\hat{c}}{\epsilon}t)$.

	Nevertheless, in order to obtain the value of $\tilde{R}_t$, each
	quantum iteration $t$ is followed with a measurement. The information
	of $W^x_{s_t}$ saved in superposition would be disrupted and thus it
	have to be evaluated from every beginning every time. Therefore the
	overall complexity is
	$\mathcal{O}
	(
	\frac
	{\hat{c}}
	{\epsilon}\sum_{t=1}^Tt
	)
	=
	\mathcal{O}(\frac{\hat{c}}{\epsilon}T^2)$,
	comparing to the classical case, which is
	$\mathcal{O}\left(\frac{\hat{c}^2}{\epsilon^2}T\right)$. As
	discussed in remark at the end of the
	section~\hyperref[sec:ClassA]{Proof of Theorem~\ref{thm:ProbOS}},
	AdaBoost algorithm may not work well if it does not converge within a
	small number of iterations. Therefore, the $T$ here may be considered
	as a small constant.

	Also, for both quantum and classical algorithms, we use
	$N=\mathcal{O}\qty(\frac{\hat{c}^2}{\epsilon^2})$, the query complexity
	of classical algorithm can be rewritten as $\mathcal{O}(NT)$ and the
	quantum query complexity is then $\mathcal{O}(\sqrt{N}T^2)$.

	This quantum algorithm could give the same result of the classical
	algorithm with the same order of precision $\epsilon$ with same success
	probability.

	\end{appendix}

	\end{document}